\newif\ifall
\newtheorem{theorem}{Theorem}
\begin{document}
\bstctlcite{IEEEexample:BSTcontrol}

\title{Atomic Crosschain Transactions}

\author{
     \IEEEauthorblockN{
    	Peter Robinson\IEEEauthorrefmark{1}\IEEEauthorrefmark{2},
    	Raghavendra Ramesh\IEEEauthorrefmark{1}, 
    	John Brainard\IEEEauthorrefmark{1},
    	Sandra Johnson\IEEEauthorrefmark{1}\IEEEauthorrefmark{3}\\
         } 
    \IEEEauthorblockA{\IEEEauthorrefmark{1}Protocol Engineering Group and Systems (PegaSys), ConsenSys\\}
    \IEEEauthorblockA{\IEEEauthorrefmark{2}School of Information Technology and Electrical Engineering, University of Queensland, Australia\\}
    \IEEEauthorblockA{\IEEEauthorrefmark{3}ACEMS, Queensland University of Technology, Australia\\}
    \IEEEauthorblockA{
    	peter.robinson@consensys.net,
    	raghavendra.ramesh@consensys.net,
    	johngbrainard@gmail.com,
	sandra.johnson@consensys.net\\
	}
    \IEEEauthorblockA{
    	April 2, 2020 (Version 1.0)
	}	
}

\maketitle
\begin{abstract}
Atomic Crosschain Transaction technology allows composable programming across private Ethereum blockchains. 
It allows for inter-contract and inter-blockchain function calls that are both synchronous and atomic: if one part fails, the whole call graph of function calls is rolled back. 
It is not based on existing techniques such as Hash Time Locked Contracts, relay chains, block header transfer, or trusted intermediaries. BLS Threshold Signatures are used to prove to validators on one blockchain that information came from another blockchain and that a majority of the validators of that blockchain agree on the information. 
Coordination Contracts are used to manage the state of a Crosschain Transaction and as a repository of Blockchain Public Keys. Dynamic code analysis and signed nested transactions are used together with live argument checking to ensure execution only occurs if the execution results in valid state changes. 
Contract Locking and Lockability enable atomic updates.

\end{abstract}

\IEEEpeerreviewmaketitle

\section{Introduction}
Atomic Crosschain Transactions \cite{robinson2019b} for permissioned Ethereum blockchains \cite{robinson2018a} allow for inter-contract and inter-blockchain function calls that are both synchronous and atomic. Atomic Crosschain Transactions are special nested Ethereum transactions that include additional fields to facilitate the atomic behaviour securely. The technology has been designed to shield application developers from the complexity of crosschain transactions by incorporating the required changes into the Ethereum Client software.

Fig.~\ref{fig:arch} shows a system of blockchains. Enterprises are indicated by the colour of lines between two blockchains. For example, the enterprise represented by the green lines has nodes on blockchains A, C, D, X, and Y; the enterprise represented by the blue lines has nodes on blockchains A, B, D, and X; and the enterprise represented by the red lines has nodes on all blockchains. Using Atomic Crosschain Transaction technology, an enterprise can create a crosschain transaction who's call graph can traverse any blockchain they have a node on. The Coordination Blockchains are used to hold the cross transaction state. All nodes on all blockchains processing a crosschain transaction need to be able to access the Coordination Blockchain referenced in a crosschain transaction.

\begin{figure}[b]
  \includegraphics[width=\linewidth]{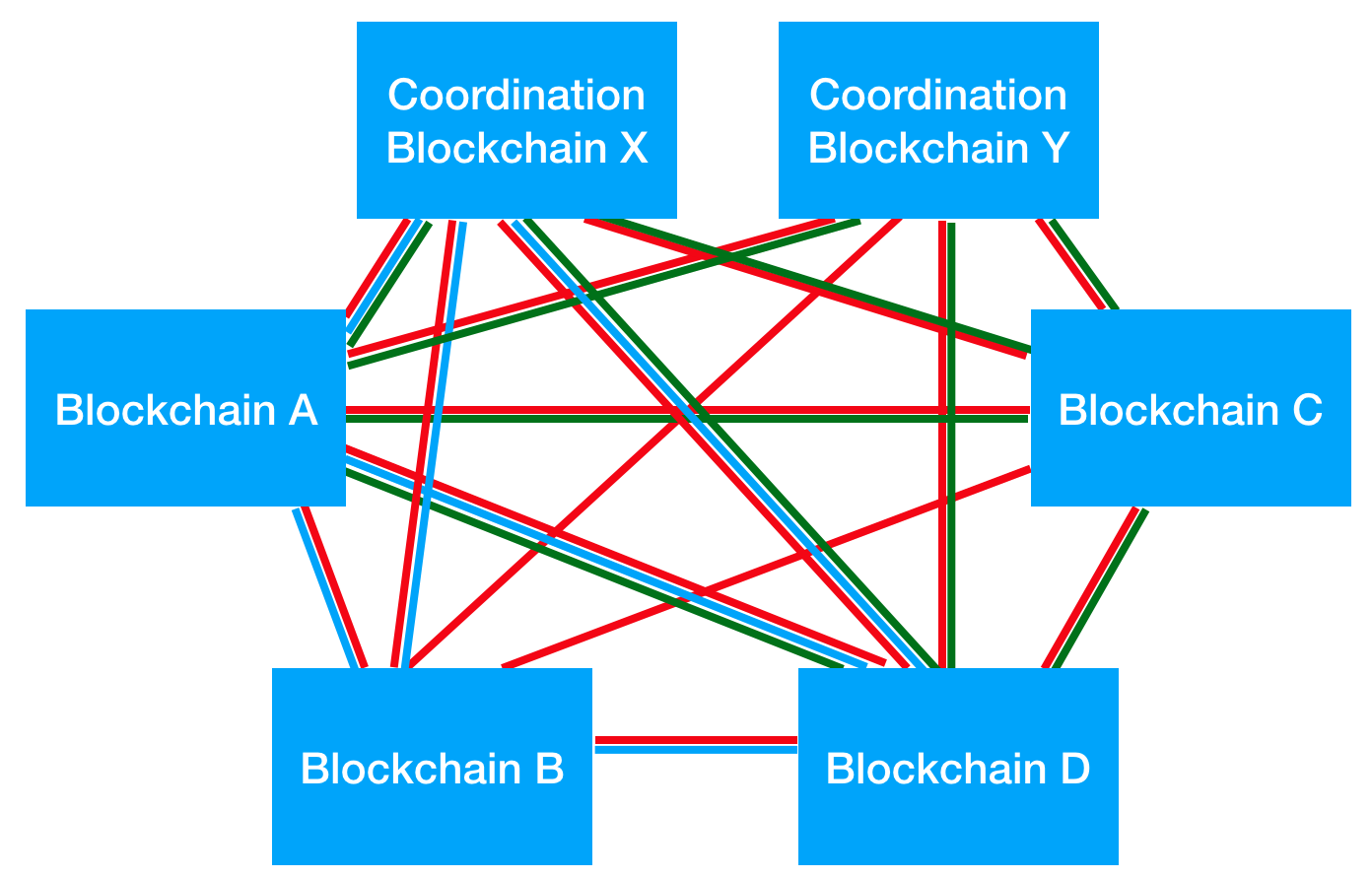}
  \caption{Cross-Blockchain Linkage}
  \label{fig:arch}
\end{figure}

Fig.~\ref{fig:functioncall} shows code fragments from two contracts on two blockchains. The function \texttt{funcA} in contract \texttt{ConA} on Private Blockchain A calls the function \texttt{funcB} in contract \texttt{ConB} on Private Blockchain B. Atomic Crosschain Transactions allow the crosschain function call to occur synchronously and atomically. By synchronously it is meant that function calls that return results can be executed and the results are immediately available and that all execution will complete and updates will be committed prior to the end of the Atomic Crosschain Transaction. By atomic it is meant that the updates are either applied on both blockchains, or the updates are discarded on both blockchains.

\begin{figure}
  \includegraphics[width=\linewidth]{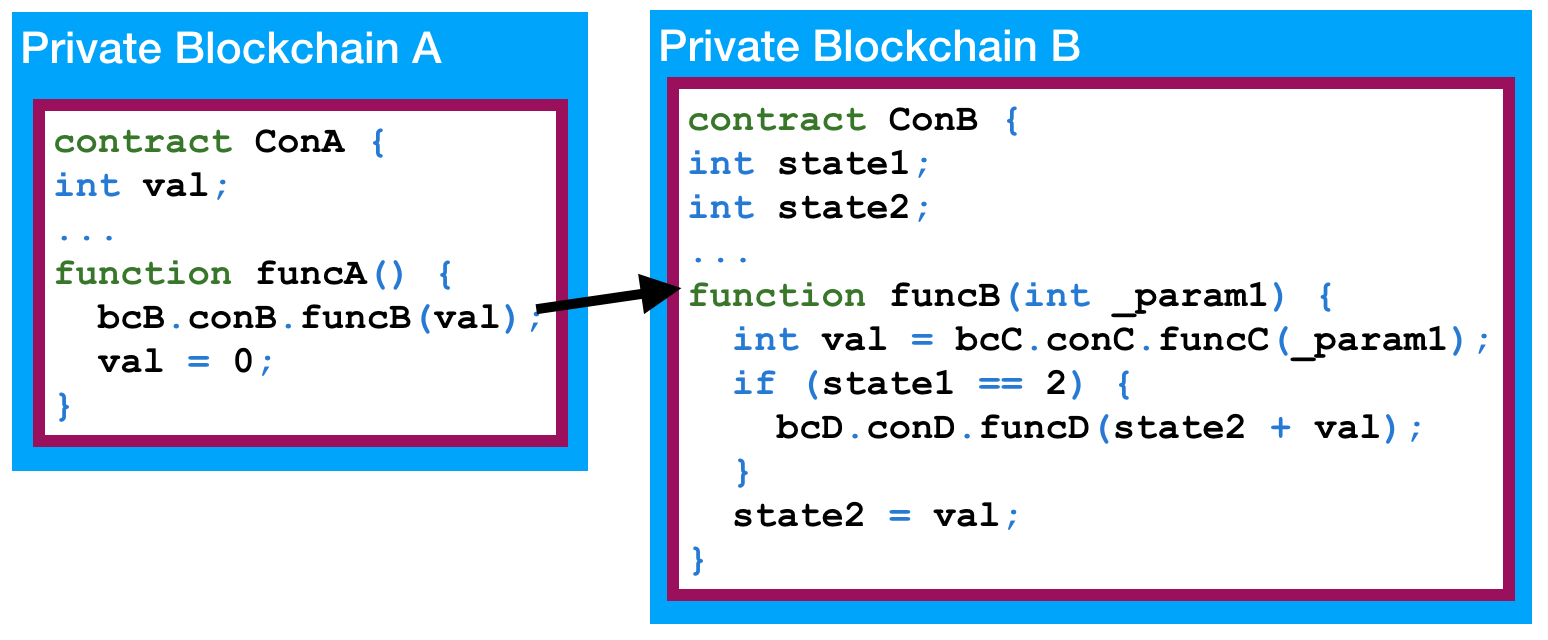}
  \caption{Function Calls across Blockchains}
  \label{fig:functioncall}
\end{figure}

Atomic Crosschain Transactions technology does not rely on Time Hash Lock Contracts \cite{hashtimelock}, relay chains \cite{cosmos2016}, block header relaying \cite{Clearmatics2018c,btc-relay}, or trusted intermediaries. It is a new technique which brings together BLS Threshold Signatures, nested transactions, and dynamic parameter checking against expected signed values, and a Coordination Blockchain as a holder of crosschain transaction state and as a time-out time reference.

Section~\ref{sec:components} \textit{Components} describes the concepts behind the technology. Section~\ref{sec:processing} \textit{Transaction Processing} walks through happy case and failure case transaction processing. Section~\ref{sec:development} \textit{Application Development} discusses considerations that should be considered when creating applications with this technology. Section~\ref{sec:analysis} \textit{Analysis} describes the Safety and Liveness properties of this technology.

Source code for this technology is available on github.com\cite{crosschain-github}.

\section{Components}
\label{sec:components}
\subsection{Nested Transactions}
Atomic Crosschain Transactions are nested Ethereum transactions and views. Transactions are function calls that update state. Views are function calls that return a value but do not update state. Fig.~\ref{fig:nested1} shows an Externally Owned Account (EOA) calling a function \texttt{funcA} in contract \texttt{ConA} on blockchain \texttt{Private Blockchain A}. This function in turn calls function \texttt{funcB}, that in turn calls functions \texttt{funcC} and \texttt{funcD}, each on separate blockchains. The transaction submitted by the EOA is called the \textit{Originating Transaction}. The transactions that the Originating Transaction causes to be submitted are called Subordinate Transactions. Subordinate Views may also be triggered. In Fig.~\ref{fig:nested1}, a Subordinate View is used to call \texttt{funcC}. This function returns a value to \texttt{funcB}.

\begin{figure}
  \includegraphics[width=\linewidth]{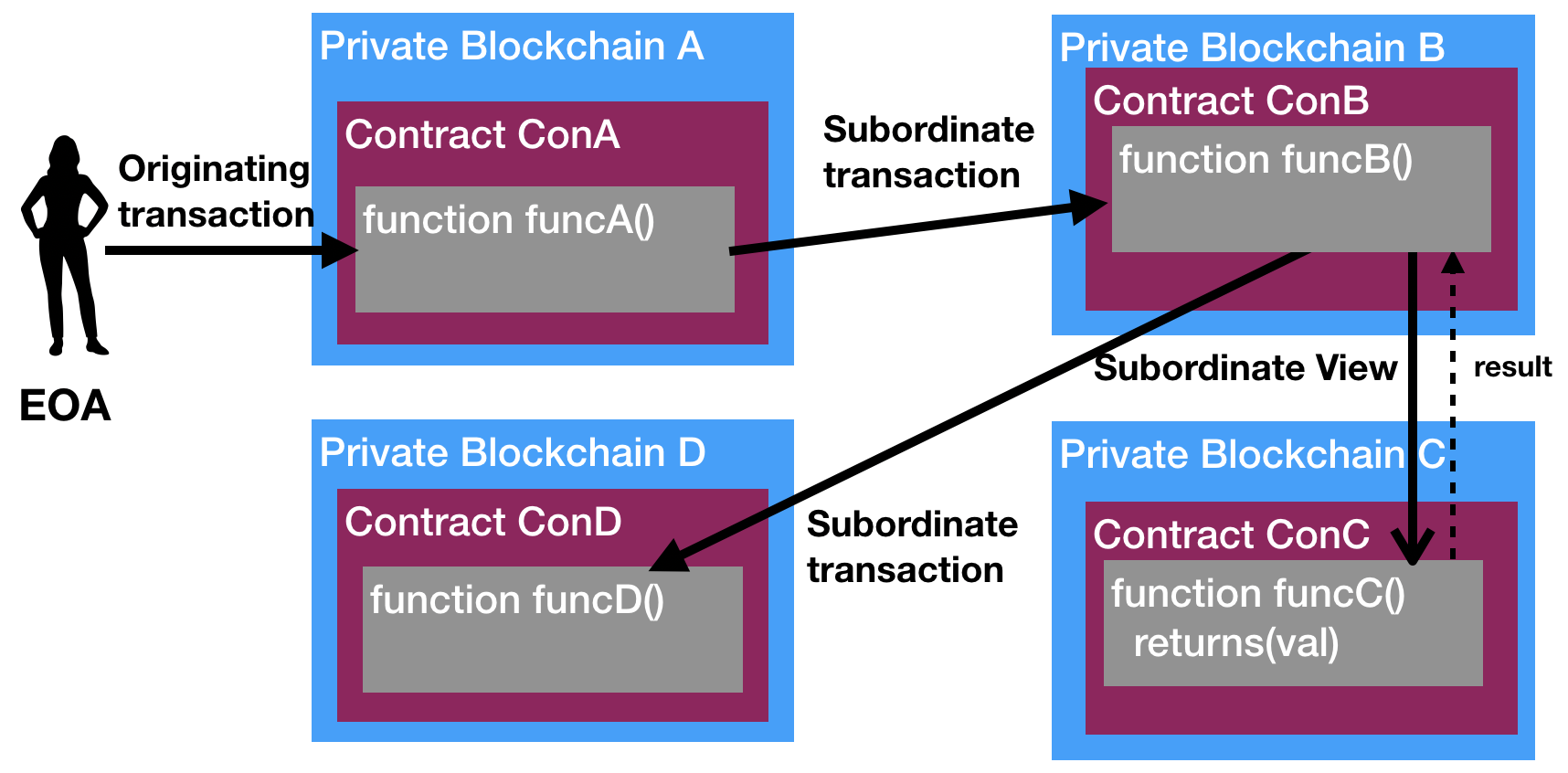}
  \caption{Originating Transaction containing Two Nested Subordinate Transactions and a Subordinate View}
  \label{fig:nested1}
\end{figure}

Fig.~\ref{fig:nested2} shows the nested structure of the Atomic Crosschain Transaction. The EOA user first creates the signed Subordinate View for \texttt{Private Blockchain C}, contract \texttt{ConC}, function \texttt{funcC} and the signed Subordinate Transaction for \texttt{Private Blockchain D}, contract \texttt{ConD}, function \texttt{funcD}. They then create the signed Subordinate Transaction for \texttt{Private Blockchain B}, contract \texttt{ConB}, function \texttt{funcB}, and include the signed Subordinate Transaction and View. Finally, they sign the Originating Transaction for \texttt{Private Blockchain A}, contract \texttt{ConA}, function \texttt{funcA}, including the signed Subordinate Transactions and View.  

\begin{figure}
  \centering
  \includegraphics[width=0.7\linewidth]{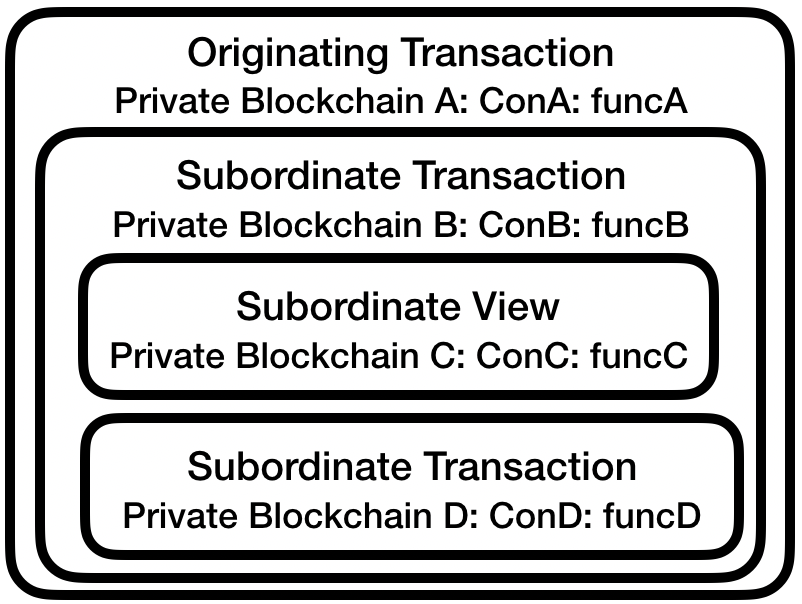}
  \caption{Nested Transactions and Views}
  \label{fig:nested2}
\end{figure}

\subsection{Actual and Expected Parameter Values}
When a function executes in the Ethereum Virtual Machine \cite{wood2016a} function parameter values and stored state combine to form the actual values of variables during execution. For example, consider \texttt{funcB} in contract \texttt{ConB} on \texttt{Private Blockchain B} in Fig.~\ref{fig:functioncall}. Assuming \texttt{\_param1} is \texttt{1}, \texttt{state1} is \texttt{2}, \texttt{state2} is \texttt{4}, and that \texttt{funcC} returns the value \texttt{6}, then function \texttt{funcC} will be called with the parameter value \texttt{1}, and function \texttt{funcD} will be called with the parameter value \texttt{10}. To execute this as part of a Crosschain Transaction, signed Subordinate Transactions and View need to be created with the appropriate parameter values. 

Execution of a transaction or view fails if the actual parameter values passed to a function does not match the values in the signed Subordinate Transaction or View. The parameter values in the signed Subordinate Transaction or View are the values the application developer expected to be passed to the function. The expected values can be determined at time of nested transaction creation using dynamic program analysis techniques. In particular, the dynamic analysis needs to consider program flow. For instance, if \texttt{state1} was \texttt{1}, then \texttt{funcD} would not be called, and no Subordinate Transaction should be included in the Crosschain Transaction for this function call. This ensures that Subordinate Transactions and Views are executed on the exact state that was intended at the time of creation of the crosschain transaction.

\subsection{Per-Node Transaction Processing}

When the EOA submits the Originating Transaction to a node, the node processes the transaction using the algorithm shown in Listing~\ref{listing:processing}. If the transaction includes any Subordinate Views, they are dispatched and their results are cached (Lines 1 to 3). The function is then executed (Lines 4 to 17). If a Subordinate Transaction function call is encountered, the node checks that the parameter values passed to the Subordinate Transaction function call match the parameter values in the signed Subordinate Transaction (Lines 6 to 8). If a Subordinate View function call is encountered, the node checks that the parameters passed to the Subordinate View function call match the parameter values in the signed Subordinate View (Lines 9 and 10). The cached values of the results of the Subordinate View function calls are then returned to the executing code (Line 11). If the execution has completed without error, then each of the signed Subordinate Transactions is submitted to a node on the appropriate blockchain (Nodes 18 to 20).

\begin{lstlisting}[
%  frame=single,
  basicstyle=\footnotesize\ttfamily,
  numbers=left,
stepnumber=1, 
  firstnumber=1,
  numberfirstline=true,
  numbersep=5pt,    
  xleftmargin=0.5cm,
  morekeywords={msg},
  label=listing:processing,
  caption=Originating or Subordinate Transaction Processing
]
For All Subordinate Views {
  Dispatch Subordinate Views & cache results
}
Trial Execution of Function Call {
  While Executing Code {
    If Subordinate Transaction function called {
      check expected & actual parameters match.
    } 
    Else If Subordinate View function is called {
      check expected & actual parameters match
      return cached results to code
    } 
    Else {
      Execute Code As Usual
    }
  }
}
For All Subordinate Transactions {
  Submit Subordinate Transactions
}
\end{lstlisting}

\subsection{Blockchain Signing and Threshold Signatures}
BLS Threshold Signatures \cite{bls-threshold, bls-threshold-youtube} combines the ideas of threshold cryptography \cite{shamir1979} with Boneh-Lynn-Shacham(BLS) signatures \cite{bls2004}, and uses a Pedersen commitment scheme \cite{ped1991} to ensure verifiable secret sharing. The scheme allows any \texttt{M} validator nodes of the total \texttt{N} validator nodes on a blockchain to sign messages in a distributed way such that the private key shares do not need to be assembled to create a signature. Each validator node creates a signature share by signing the message using their private key share. Any \texttt{M} of the total \texttt{N} signature shares can be combined to create a valid signature. Importantly, the signature contains no information about which nodes signed, or what the threshold number of signatures (\texttt{M}) needed to create the signature is.

The Atomic Crosschain Transaction system uses BLS Threshold Signatures to prove that information came from a specific blockchain. For example, in Fig.~\ref{fig:nested1}, nodes on \texttt{Private Blockchain B} can be certain of results returned by a node on \texttt{Private Blockchain C} for the function call to \texttt{funcC}, as the results are threshold signed by the validator nodes on \texttt{Private Blockchain C}. Similarly, validator nodes on \texttt{Private Blockchain A} can be certain that validator nodes on \texttt{Private Blockchain B} have mined the Subordinate Transaction, locked contract \texttt{ConB} and are holding the updated state as a provisional update because validator nodes sign a \textit{Subordinate Transaction Ready} message indicating that the Subordinate Transaction is ready to be committed.

\subsection{Multichain Nodes}
A Multichain Node is a logical grouping of one or more blockchain validator nodes, where each node is on a different blockchain. The blockchain nodes operate together to allow Crosschain Transactions. The Multichain Node on which the transaction is submitted must have Validator Nodes on all of the blockchains on which the Originating Transaction and Subordinate Transactions and Views take place. 

Fig.~\ref{fig:multichain} shows four enterprises that have validator nodes on \texttt{Private Blockchain A} to \texttt{Private Blockchain D}. Alice who works in Enterprise 1 can submit Atomic Crosschain Transactions that span \texttt{Private Blockchain A} to \texttt{Private Blockchain D} as Enterprise 1 has a Multichain Node that includes validator nodes on each blockchain. However, Bob who works in Enterprise 4 can only submit Atomic Crosschain Transactions that span \texttt{Private Blockchain B} and \texttt{Private Blockchain C} as Enterprise 4 only has validator nodes on \texttt{Private Blockchain B} and \texttt{Private Blockchain C}.

\begin{figure}
  \includegraphics[width=\linewidth]{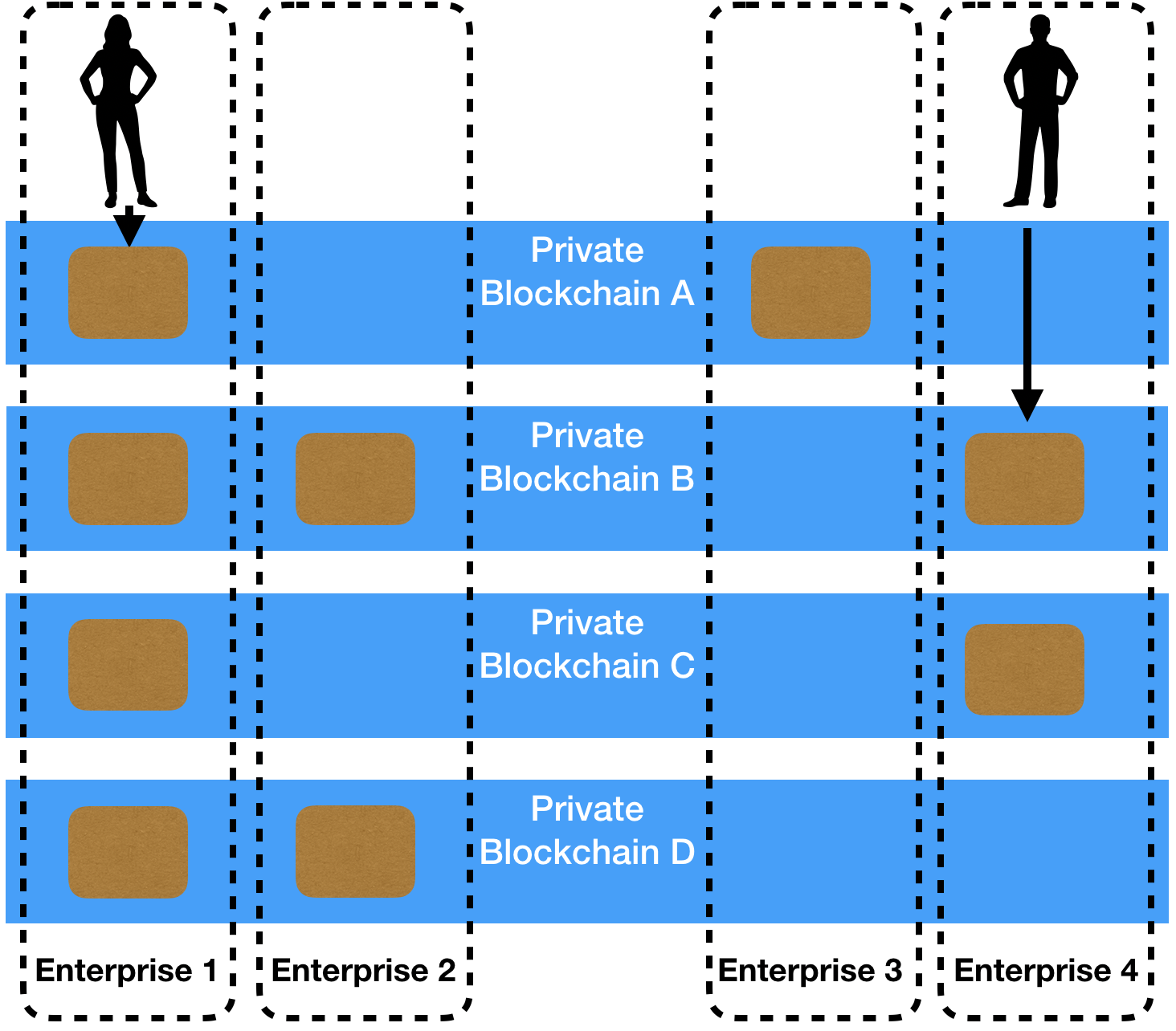}
  \caption{Multichain Nodes}
  \label{fig:multichain}
\end{figure}

\subsection{Crosschain Coordination}
\textit{Crosschain Coordination Contracts} exist on \textit{Coordination Blockchains}. They allow validator nodes to determine whether the provisional state updates related to the Originating Transaction and Subordinate Transactions should be committed or discarded. The contract is also used to determine a common time-out for all blockchains, and as a repository of Blockchain Public Keys. 

When a user creates a Crosschain Transaction, they specify the Coordination Blockchain and Crosschain Coordination Contract to be used for the transaction, and the time-out for the transaction in terms of a block number on the Coordination Blockchain. The validator node that they submit the Originating Transaction to (the \textit{Originating Node}) works with other validator nodes on the blockchain to sign a \textit{Crosschain Transaction Start} message. This message is submitted to the Crosschain Coordination Contract to indicate to all nodes on all blockchains that the Crosschain Transaction has commenced. 

When the Originating Node has received \textit{Subordinate Transaction Ready} messages for all Subordinate Transactions, it works with other validator nodes to create a \textit{Crosschain Transaction Commit} message. This message is submitted to the Crosschain Coordination Contract to indicate to all nodes on all blockchains that the Crosschain Transaction has completed and all provisional updates should be committed. If an error is detected, then a \textit{Crosschain Transaction Ignore} message is created and submitted to the Crosschain Coordination Contract to indicate to all nodes on all blockchains that the Crosschain Transaction has failed and all provisional updates should be discarded. Similarly, if the transaction times-out, all provisional updates will be discarded.

\subsection{Contract Locking and Provisional State Updates}
When a contract is first deployed it is marked as a Lockable Contract or a Nonlockable Contract. A Nonlockable Contract, the default, is one which can not be locked. When a node attempts to update the state of a contract given an Originating or Subordinate Transaction, it checks whether the contract is \textit{Lockable} and whether it is \textit{locked}. The transaction fails if the contract is Nonlockable or if the contract is Lockable but is locked.

Figure \ref{fig:contractlockingstates} shows the locking state transitions for a contract. The Crosschain Coordination Contract will be in \textit{Started} state. The act of mining an Originating Transaction or Subordinate Transaction and including it in a blockchain locks the contract. The contract is unlocked when the Crosschain Coordination Contract is in the \textit{Committed} or \textit{Ignored} state, or when the block number on the Coordination Blockchain is greater than the Transaction Timeout Block Number. The Crosschain Coordination Contract will change from the \textit{Started} state to the \textit{Committed} state when a valid Crosschain Transaction Commit message is submitted to it, and it will change from the \textit{Started} state to the \textit{Ignored} state when a valid Crosschain Transaction Ignore message is submitted to it. 
\begin{figure}
  \includegraphics[width=\linewidth]{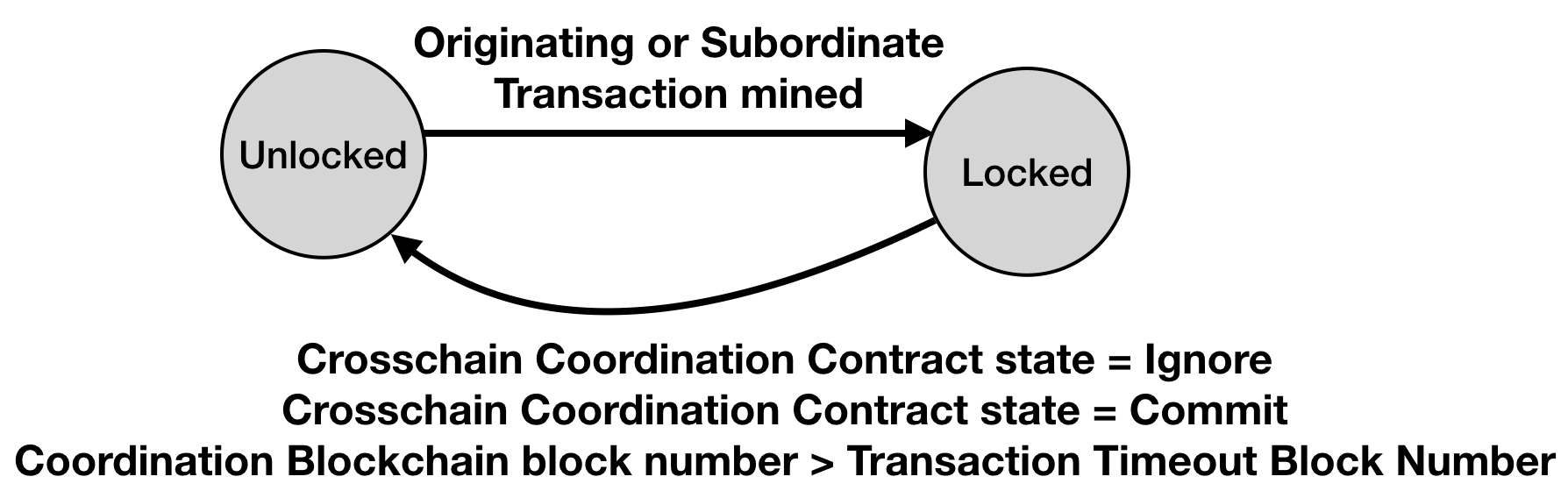}
  \caption{Contract Locking States}
  \label{fig:contractlockingstates}
\end{figure}

Ordinarily, all nodes will receive a Signalling Transaction indicating that they should check the Crosschain Coordination Contract when the contract can be unlocked. When a node first processes a transaction, it will set a local timer which should expire when the Transaction Timeout Block Number is exceeded. If the node has not received the message by the time the local timer expires, the node checks the Crosschain Coordination Contract to see if the Transaction Timeout Block Number has been exceeded and whether the updates should be committed or ignored.

\subsection{Crosschain Transaction Fields}
Originating Transactions, Subordinate Transactions, and Subordinate Views contain the fields shown in Table~\ref{table:fields}. Some of the information in the standard Ethereum transaction fields are exposed to blockchain application contract code, such as the \texttt{value} field via the Solidity code \texttt{msg.value}. The new extended crosschain transaction fields are made available to blockchain application contract code via a precompile contract.

\begin{table}
  \centering
    \begin{tabular}{| l | l |}
    \hline
    Field & Description  \\
       \hline
       \hline
        \multicolumn{2}{|l|}{Standard Ethereum Transaction Fields} \\
       \hline
Nonce     & Per-account, per-blockchain transaction number. \\
       \hline
GasPrice & Amount offered to pay for gas for the transaction.\\
       \hline
GasLimit  &  Maximum gas which can be used by the transaction.\\
       \hline
To            & Address of the account to send the value to, or the  \\
                & address of a contract to call.\\
       \hline
Value       & Amount of Ether to transfer.\\
       \hline
Data         & Encoded function signature and parameter values.\\
       \hline
V              & Part of the transaction digital signature \& blockchain \\
                & identifier this transaction must execute on.\\
       \hline
R              & Part of the transaction digital signature.\\
       \hline
S              & Part of the transaction digital signature.\\
       \hline
       \hline
        \multicolumn{2}{|l|}{Additional Crosschain Transaction Fields} \\
       \hline
    Type     & Type of crosschain transaction (e.g. Originating \\
                 & Transaction) \\
    \hline
Coordination & Blockchain identifier of Coordination Blockchain to  \\
Blockchain Id & use for this transaction.\\
       \hline
Crosschain  & Address of the Crosschain Coordination Contract \\
Coordination &  to use for this transaction. \\
Contract       & \\
       \hline
Crosschain   & Coordination Blockchain block number when this\\ 
Transaction  & transaction will time out. \\
Time-out      & \\
       \hline
Crosschain  &  Identifies this crosschain transaction. \\
Transaction Id & \\
       \hline
Originating  & Blockchain identifier of the blockchain the  \\
Blockchain Id &  Originating Node is on. \\
       \hline
From             & Blockchain identifier of the blockchain that the   \\
Blockchain Id &  function call executed on that resulted in this \\
                       & Subordinate Transaction or View being submitted. \\
       \hline
From            & \textit{To} address from the transaction or view that resulted \\
Address        & in this Subordinate  Transaction or View.\\
       \hline
Subordinates & List of Subordinate Transactions and Subordinate  \\
                       & Views that are called directly from this transaction \\
                       & or view. \\
       \hline
  \end{tabular}
  \caption{Crosschain Transaction Fields}
  \label{table:fields}
\end{table}

All nodes that process the transaction check that the \texttt{Coordination Blockchain Id}, \texttt{Crosschain Co- ordination Contract}, \texttt{Crosschain Transaction Time-out}, \texttt{Crosschain Transaction Id}, and \texttt{Originating Blockchain Id} are consistent across the transaction or view they are processing, and the nested Subordinate Transactions and Views. The nodes also check that the \texttt{To} address and \texttt{From Address}, and the blockchain identifier obtained from the \texttt{V} field and the \texttt{From Blockchain Id} match across transactions and views.

The \texttt{To} address is the address of the contract containing the function called on a blockchain. For example, the function (f1) in contract (c1) could call a function (f2) in another contract (c2) on the same blockchain (b1). The second contract (c2) could call a function (f3) in a contract (c3) on another blockchain (b2) via a Subordinate Transaction. The \texttt{From Address} of the Subordinate Transaction will match the \texttt{To} address of the transaction on the first blockchain (b1). This will be the address first contract (c1). It will however, not match the address of the second contract (c2), which is the function that caused the Subordinate Transaction to be triggered.

\section{Transaction Processing}
\label{sec:processing}
\subsection{Happy Case}
\begin{figure}
  \includegraphics[width=\linewidth]{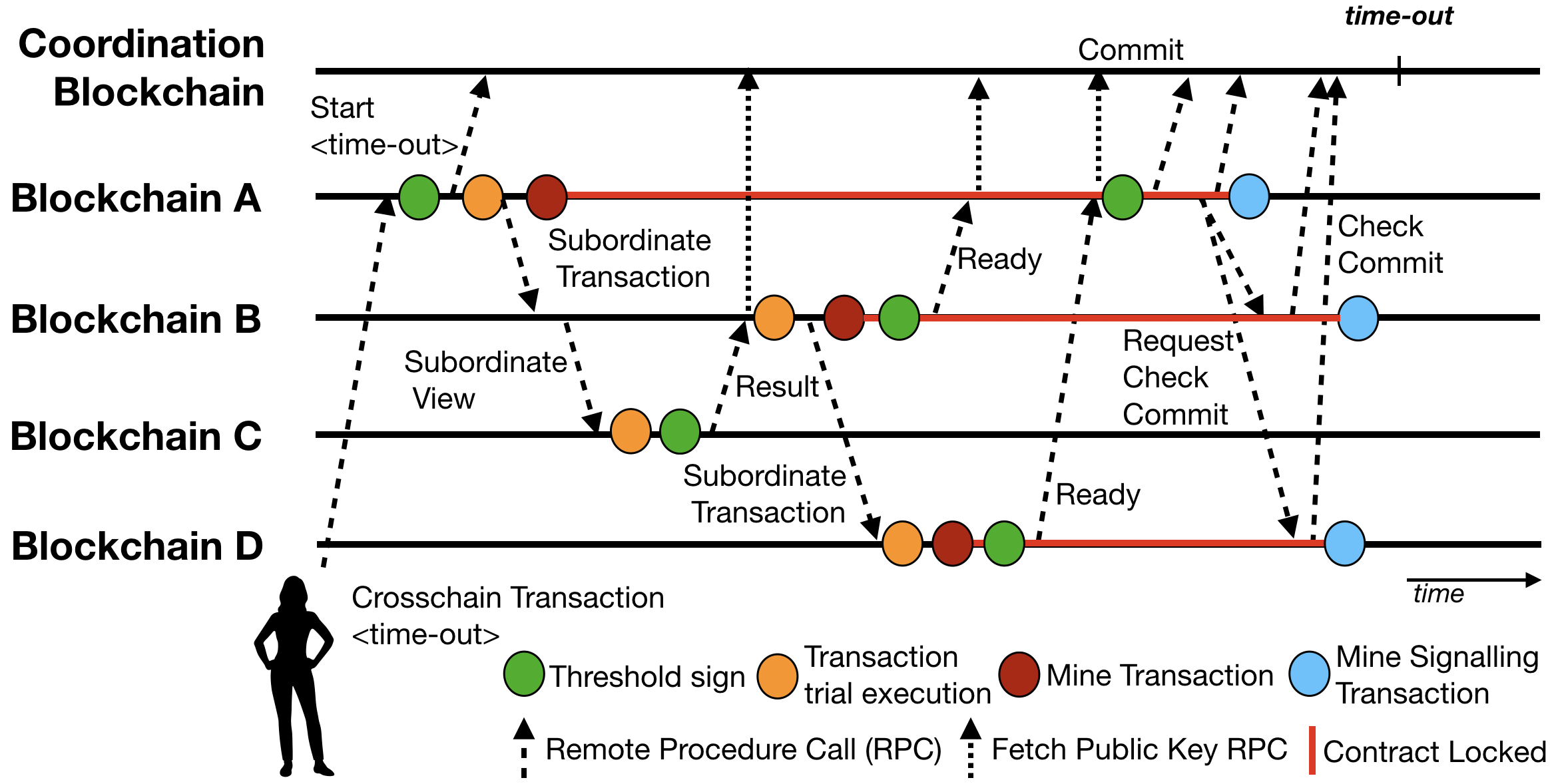}
  \caption{Sequence Diagram}
  \label{fig:chapter7:sequence}
\end{figure}

Fig.~\ref{fig:chapter7:sequence} shows a sequence diagram showing an example sequence of execution, based on the calls in Fig.~\ref{fig:nested1}. The sequence starts with an Externally Owned Account (EOA) submitting a Crosschain Transaction to a node on Blockchain A. This nested transaction contains the Originating Transaction, and all Subordinate Transactions and Views, plus a time-out in terms of Coordination Blockchain block numbers. The node is known as the Coordinating Node for this transaction. 

The nodes on Blockchain A cooperate to threshold sign the \textit{Start} message. The signed \textit{Start} message is submitted to the Coordination Contract on the Coordination Blockchain specified in the Originating Transaction (and all Subordinate Transactions and Views). The Coordination Contract on the Coordination Blockchain accepts the \textit{Start} message if its signature can be verified with the public key registered for Blockchain A. The \textit{Start} message contains the time-out block number, which the Coordinating Contract registers when it accepts the \textit{Start} message. Once the \textit{Start} is registered, the Crosschain Transaction is in \textit{Start} state.

The Coordinating Node then \textit{Trial Executes} the Originating Transaction. Assuming the code executes correctly, the Subordinate Transaction for Blockchain B is dispatched. The Originating Transaction is then submitted to Blockchain A's transaction pool. When the transaction is included in a block, the contract's updated state is stored as provisional state, and the contract state is locked.

The Subordinate Transaction for Blockchain B contains a Subordinate View for Blockchain C. The Subordinate View is dispatched to a node on Blockchain C. The node on Blockchain C completes a \textit{Trial Execution} of the Subordinate View at a certain block number. The node then cooperates with other Blockchain C nodes to have the result of the Subordinate View at the specific block number threshold signed. The node returns the signed result to the calling node on Blockchain B. 

The nodes on Blockchain B that do not contain the correct version of Blockchain C's public key fetch it from the Coordination Contract. The node \textit{Trial Executes} the Subordinate Transaction, returning the value from the Subordinate View to the code when the function corresponding to the Subordinate View is called. Assuming the \textit{Trial Execution} executes correctly, then the Subordinate Transaction to be executed on Blockchain D is dispatched to a node on Blockchain D. The Subordinate Transaction for Blockchain B is submitted to Blockchain B's transaction pool, with the signed Subordinate View result from Blockchain C attached. When the transaction is included in a block, the contract's updated state is stored as provisional state, and the contract state is locked. The nodes on Blockchain B then cooperate to threshold sign a \textit{Ready} message, indicating that the blockchain is ready to commit the Atomic Crosschain Transaction. The message is submitted to the Coordinating Node.

The Subordinate Transaction for Blockchain D is trial executed. There are no Subordinate Transactions or Views for the transaction that need to be dispatched. The transaction is mined. When the transaction has been included in a block, the updated state is stored in provisional state, the contract is locked, and the nodes on Blockchain D cooperate to threshold sign a \textit{Ready} message, indicating that the blockchain is ready to commit the Atomic Crosschain Transaction. The message is submitted to the Coordinating Node.

When the Coordinating Node (on Blockchain A) has received \textit{Ready} messages for Blockchain B and D, and the Originating Transaction has been mined, then the nodes on Blockchain A then cooperate to threshold sign a \textit{Commit} message. This message is uploaded to the Coordination Contract on the Coordination Blockchain. Assuming this message is uploaded prior to the time-out, the Crosschain Transaction is then in the commit state. 

The Coordinating Node requests that nodes on other Blockchains commit a Signalling Transaction. When a node on a blockchain encounters a Signalling Transaction for a certain Crosschain Transaction, it checks the Coordination Contract for the Crosschain Transaction to check to see if the transaction should be committed or ignored. If the transaction is to be committed, then all contract provisional state related to the transaction is converted to normal state, unlocking the contracts. If the transaction is to be ignored, then the contract provisional state is discarded and the contract state is updated, unlocking the contracts.

\subsection{Failure Cases}
If an error is detected then nodes on Blockchain A can cooperate to create a threshold signed \textit{Ignore} message which is uploaded to the Coordination Contract to indicate to all nodes that the Crosschain Transaction should be abandoned, all provisional state updates should be discarded and contracts should be unlocked. If an \textit{Ignore} message can not be created, or if the Crosschain Transaction times-out, then the state returned by the Coordination Contract indicates the Crosschain Transaction should be abandoned.

When nodes on each blockchain first become a part of a Crosschain Transaction, they set a timer to expire when they expect the Crosschain Transaction to time-out, plus a random additional wait period. The timer is cancelled when a Signalling Transaction is received. If a node's timer expires prior to receiving a Signalling Transaction, the node checks the Coordination Contract and then creates and submits a Signalling Transaction to indicate to other nodes that the updates should be committed or ignored, and that the contract should be unlocked. The additional random wait period is used so that all nodes on the blockchain don't simultaneously send Signalling Transactions.

\section{Application Development}
\label{sec:development}
\subsection{Design for Locking}
Contracts that have state updated as part of a crosschain transaction are locked. Calls to those locked contracts will fail until they are unlocked. Traditional contracts that hold registries of information, which may need to be accessed by many users at the same time, such as ERC 20 contracts, need to be refactored to work successfully in a crosschain scenario. They should be restructured such that there is a non-lockable \textit{router} contract that determines the appropriate lockable \textit{item} contract. 

Fig~\ref{fig:erc20} shows how an ERC 20 contract could be refactored. Each account could have one or more lockable ERC 20 Payment Slot contracts. The non-lockable ERC 20 Router contract's code could be written such that transfers are sent from an account's payment slot contract that isn't locked which has adequate funds to a payment slot for the destination account that is not currently locked. In this way, multiple payments may be executed in parallel.
\begin{figure}
  \centering
  \includegraphics[width=0.7\linewidth]{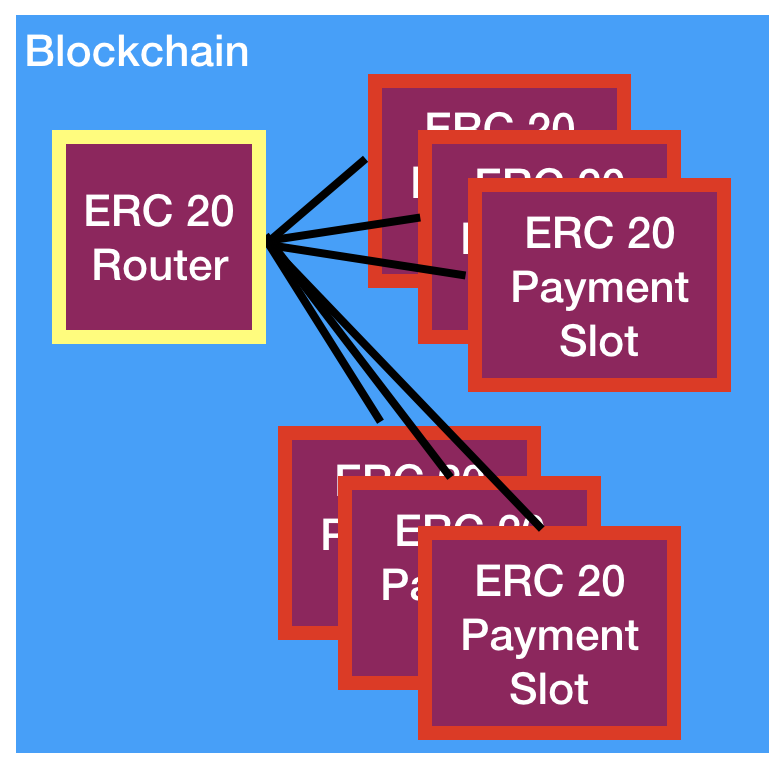}
  \caption{ERC 20 Contract for Crosschain}
  \label{fig:erc20}
\end{figure}

\subsection{Application Level Authentication}
As with traditional Ethereum transactions, the type of application level authentication required for a Crosschain Transaction will be application dependent. 
\subsubsection{No Authentication}
Many functions will need no authentication at all. That is, functions can be designed such that it is safe to execute a transaction or return results of a view to any caller who is able to access the function. 

\subsubsection{Using \texttt{msg.sender} or \texttt{tx.origin}}
From the perspective of each Originating Transaction, Subordinate Transaction or View, \texttt{msg.sender} and \texttt{tx.origin} operate in the same way as a standard Ethereum transaction. That is, if an EOA submitted a transaction that called a function in contract A that then called a function in contract B on the same blockchain, \texttt{msg.sender} for contract B is contract A, and is the EOA for contract A. In both cases \texttt{tx.origin} would be the EOA. In the context of a node processing an Originating Transaction, Subordinate Transaction or View, for the purposes of  \texttt{msg.sender} and \texttt{tx.origin}, the transaction or view appears as a separately signed transaction. Given the similarities with standard Ethereum, \texttt{msg.sender} and \texttt{tx.origin} could be used in the same way as standard Ethereum to authenticate which EOA or contract on the same blockchain called a function call using code similar to that shown in Listing~\ref{listing:example-one-blockchain}. 
\begin{lstlisting}[
%  frame=single,
  basicstyle=\footnotesize\ttfamily,
  numbers=left,
stepnumber=1, 
  firstnumber=1,
  numberfirstline=true,
  numbersep=5pt,    
  xleftmargin=0.5cm,
  morekeywords={function, external, require, sender, call, msg},
  label=listing:example-one-blockchain,
  caption=Checking msg.sender
]
function receiver() external {
  require(msg.sender == authorisedAddress);
  ...
}
\end{lstlisting}

A key difference between standard Ethereum views and Subordinate Views is that Subordinate Views are signed. As such, the variables \texttt{msg.sender} and \texttt{tx.origin} can be used within Subordinate Views, whereas they are not set in the context of normal Ethereum views (except for the case of \texttt{msg.sender} when one contract calls another contract). 

\subsubsection{From Blockchain Id, From Address, and Originating Blockchain Id}
If a contract needs to only respond to calls from a certain contract on a certain blockchain, then the code in Listing~\ref{listing:example-auth} should be used. The code checks that the \texttt{From Blockchain Id} and \texttt{From Address} match the authorised blockchain and address, and checks that the blockchains represented by \texttt{From Blockchain Id} and \texttt{Originating Blockchain Id} are semi-trusted. By semi-trusted it is meant that fewer than \texttt{M} validators operating the blockchain are Byzantine. Note that this scenario implies the contract should allow for any \texttt{msg.sender} and \texttt{tx.origin}. 
\begin{lstlisting}[
%  frame=single,
  basicstyle=\footnotesize\ttfamily,
  numbers=left,
stepnumber=1, 
  firstnumber=1,
  numberfirstline=true,
  numbersep=5pt,    
  xleftmargin=0.5cm,
  morekeywords={address, function, uint256, require, sender, call, msg},
  label=listing:example-auth,
  caption=Crosschain Application Authentication
]
function receiver() external {
  address fromAddr = infoPrecompile(FROM_ADDR);
  uint256 fromBcId = infoPrecompile(FROM_BCID);
  uint256 origBcId = infoPrecompile(ORIG_BCID);
  require(fromAddr == authorisedFromAddress);
  require(fromBcId == authorisedFromBcId);
  require(origBcId == authorisedOrigBcId);
  ...
}
\end{lstlisting}

\section{Usage Scenarios}
The following sections introduces example scenarios, based on real world use cases, to illustrate how the Atomic Crosschain Transaction technology could be leveraged to achieve the desired outcome(s).

\subsection{Travel Agent Use Case}
This use case demonstrates how the well known `Hotel and Train' problem can be solved. In this scenario, the travel agent needs to ensure the atomicity of the combined booking transaction. In other words, the travel agent needs to ensure that they either book both the hotel room and the train seat, or neither, so that they avoid the situation where a hotel room is successfully booked but the train reservation fails, or vice versa. There are several private blockchains involved: the travel agency runs a private blockchain, and each hotel and train travel company also maintains its own private blockchain. 

Fig.~\ref{fig:TravelAgentUseCase01} illustrates the crosschain function calls to book a hotel room and reserve a seat on a train. Travel Agent A purchases ERC20 tokens on the hotel and train blockchains, which they can then use to pay for accommodation and travel. As can be seen from this diagram, atomic behaviour is mainly enforced via contract locking, so that the lock remains in place until both booking transactions have been signalled as having completed successfully. 

\begin{figure} [h]
\includegraphics[width=\columnwidth]{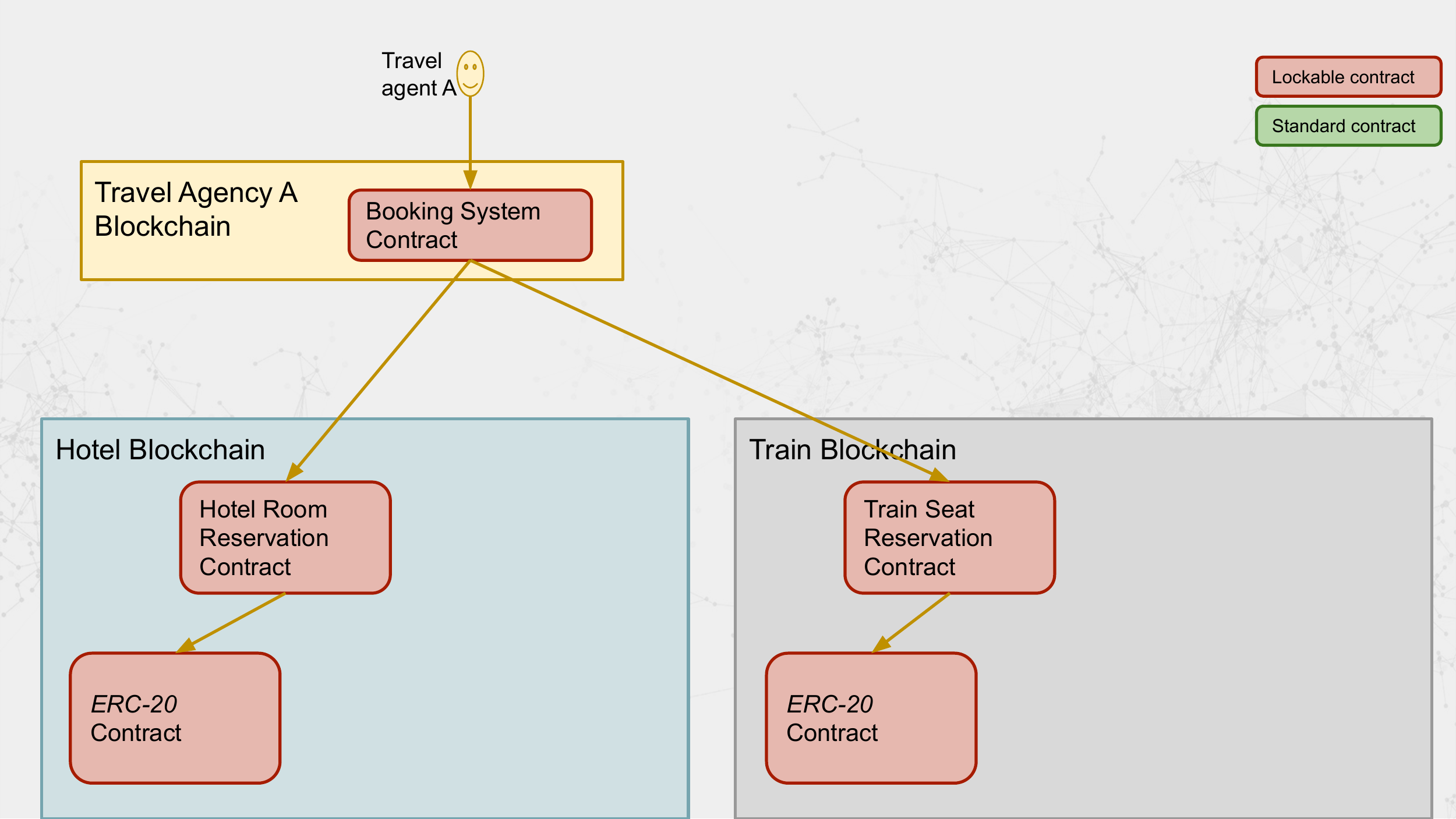}       
\caption{Travel Agent Use Case - One Agent}
\label{fig:TravelAgentUseCase01}
\end{figure}

This simplified version of the problem fails as soon as there is more than one travel agent trying to book the same hotel and/or train. For example, when the originating transaction from Travel Agent B initiates the subordinate crosschain function call to the hotel reservation contract (Fig.~\ref{fig:TravelAgentUseCase02}) before the previous crosschain function call graph from Travel Agent A had completed, the call will fail as the hotel reservation contract will still be locked. Although no funds would have been expended by Travel Agent B, this is clearly causing a bottleneck.

\begin{figure} [h]     
\includegraphics[width=\columnwidth]{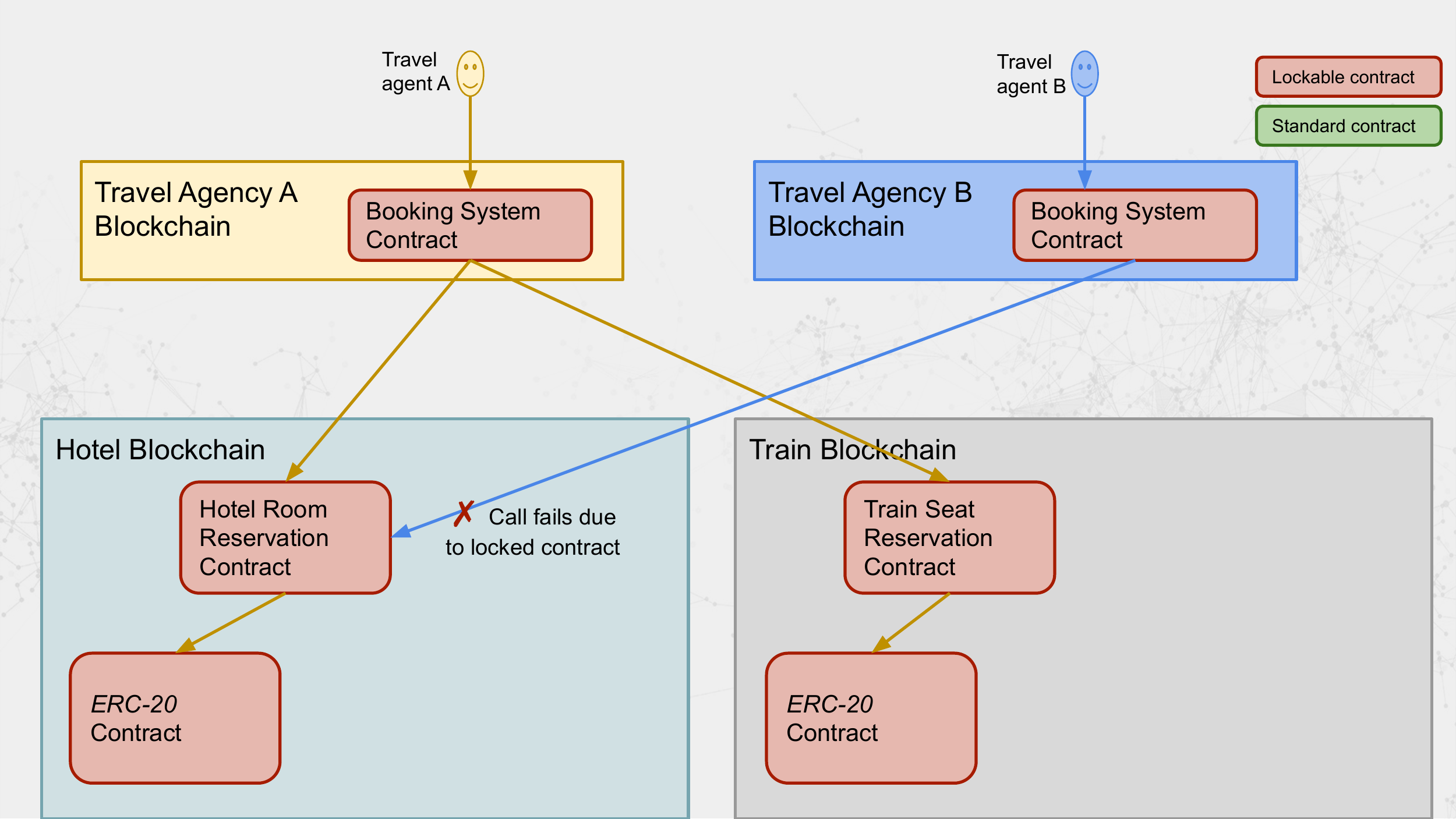}       
\caption{Travel Agent Use Case - Locking failure}
\label{fig:TravelAgentUseCase02}
\end{figure}

However, by adjusting the design of the `Hotel and Train' reservation system and its use of the Atomic Crosschain Transaction technology, we can leverage this functionality without the bottleneck shown in Fig.~\ref{fig:TravelAgentUseCase02}, as well as avoiding deadlocks. One approach is to introduce non-lockable router contracts, the green rounded rectangles depicted in Fig.~\ref{fig:TravelAgentUseCase03}, for both the hotel and train travel companies. These reservation router contracts are aware of every room's availability and status (provisional booking, booked, free) and similarly the reservation router contract on the train travel company blockchain. Moreover there are modified ERC20 contracts on both the hotel and train blockchains so that the travel agencies can pay for the hotel room and train seat. These modified ERC 20 contracts consist of a router contract and payment slot contracts, where each account has one or more payment slots. Following the crosschain function call graph illustrated in Figure~\ref{fig:TravelAgentUseCase03} it is evident that two travel agents could now simultaneously book hotel rooms and reserve train seats, by leveraging the Atomic Crosschain technology described in this paper. 

\begin{figure} [h]     
\includegraphics[width=\columnwidth]{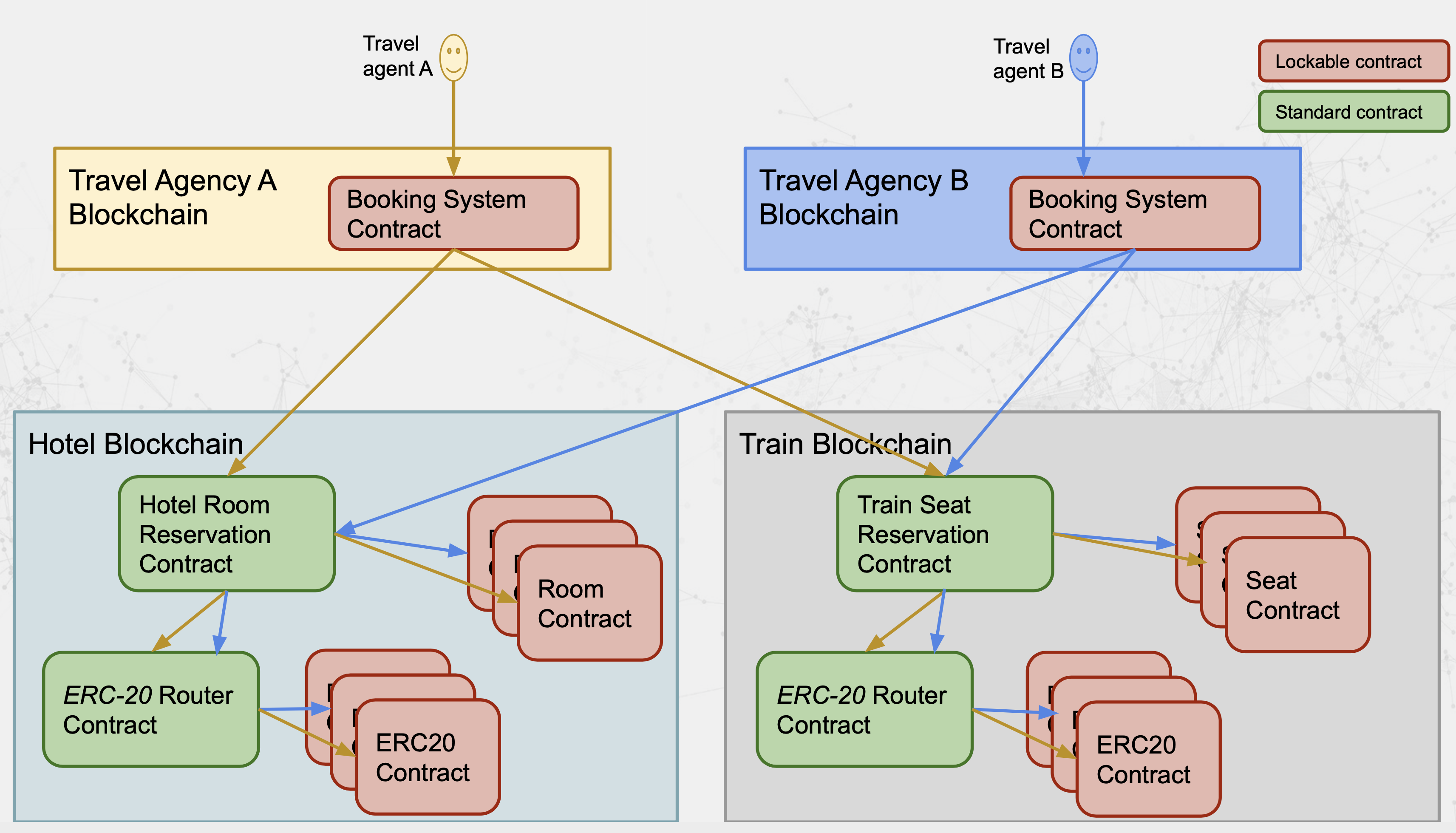}       
\caption{Travel Agent Use Case - Locking failure}
\label{fig:TravelAgentUseCase03}
\end{figure}

\subsection{Supply Chain Selective Privacy with Selective Transparency}

In many commercial applications, a vendor may wish to publish information to regulators to demonstrate regulatory compliance, or to customers to provide assurances of product provenance. With only a single, global, blockchain, all details of a transaction must be published and will be visible to all participants of the global blockchain. The vender may prefer, for example, to keep the identities of its suppliers secret from competing vendors. Selective privacy allows a vendor to publish only the subset of transaction information that is required by regulation or customer assurance, but maintain the secrecy of confidential business details.

An example scenario is illustrated in Fig.~\ref{fig:SupplyChainUseCase} on page~\pageref{fig:SupplyChainUseCase} where a customer is able to access details of the product being purchased from the vendor, such as its origin (place(s) of production and processing), and any certification (e.g. organic) which is noted on the product's label.

\begin{figure} [h]  
\centering
\includegraphics[width=\columnwidth]{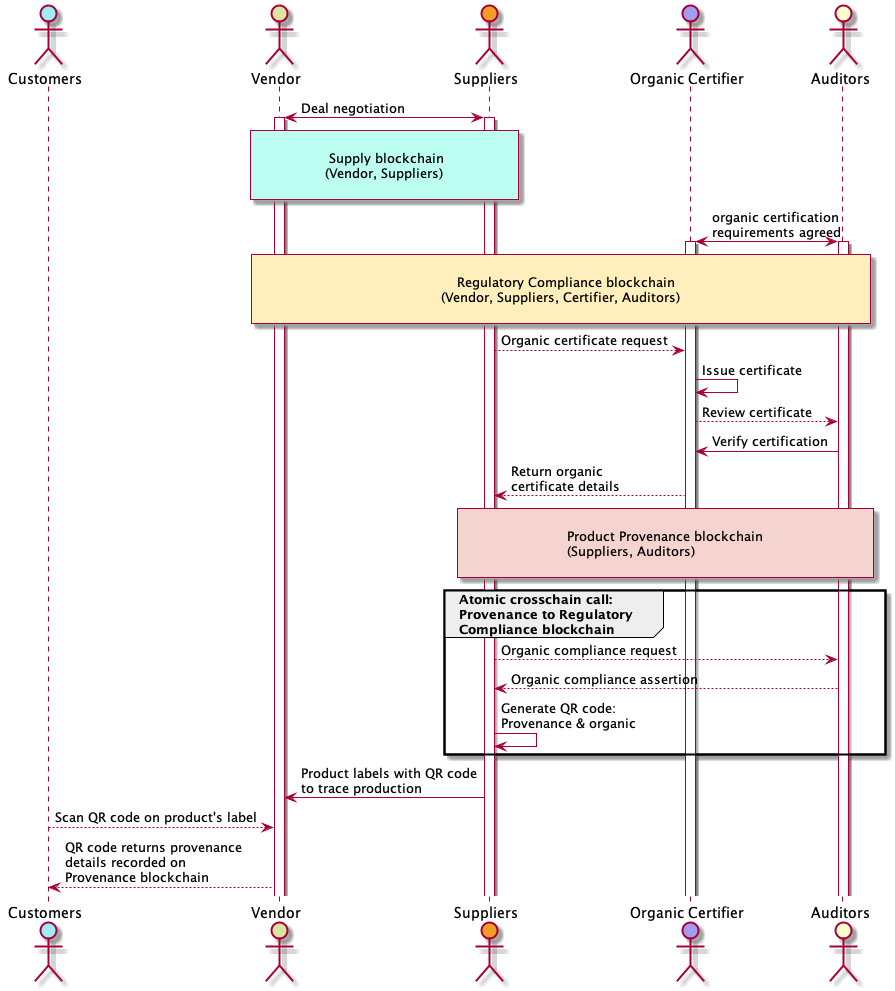}    
\caption{Supply Chain Use Case}
\label{fig:SupplyChainUseCase}
\end{figure}

In this scenario, there are three blockchains:
\begin{itemize}
\item A \textit{Supply Chain} blockchain which maintains all of the transactions between a vendor and its suppliers.
\item A \textit{Regulatory Compliance} blockchain that includes members of the Supply Chain, an Auditor and an Organic Certifier Regulator. This chain includes all data needed to guarantee regulatory compliance, but without the confidential business details found in the Supply Chain.
\item  A \textit{Product Provenance} blockchain which holds all the information required to assure customers of various aspects of the goods being purchased. For example, the authenticity of the product's claim of being organic or ``Fair Trade" goods, and the place of production of the product, including the entire (or partial) production chain of the goods. The latter is especially important with respect to a systems approach to pest risk management that encompasses more environmentally friendly techniques~\cite{Holt2018}.
\end{itemize}

The use of Atomic Crosschain Transactions in this scenario will enable the Product Provenance blockchain to request the compliance of a product which is purported to be organic. Only the status of the certification is returned to the Provenance chain from the Regulatory Compliance blockchain, with details of the assessment remaining private to members of the Regulatory Compliance blockchain. Based on the outcome of the atomic crosschain call, a QR code can be generated and added to the Product Provenance blockchain. The suppliers can use this QR code to print onto their product labels, which a customer can then scan to retrieve the product information.

Similarly, events in the Supply Chain blockchain can be published selectively to the both the Regulatory Compliance blockchain and the Product Provenance blockchain. This facilitates the application of both selective privacy and selective transparency. Moreover, Atomic Crosschain Transactions ensure that any piece of information that needs to appear in more than one blockchain appears in all the necessary blockchains or, if a failure occurs, in none of them.

\subsection{Reading from Oracle Chains}

An Oracle Chain is a blockchain that maintains a set of data that is valuable to other blockchain applications. One example is currency exchange when value is transferred between two blockchains that are using different currencies. The Oracle chain may serve as the authoritative source for the rate of exchange between the two currencies. Both the sender and receiver consult the Oracle Chain to be assured that the transaction is properly credited. Fig.~\ref{fig:OracleChainUseCase} illustrates the Oracle Chain use case.

\begin{figure} [h]
\centering
\includegraphics[width=0.7\columnwidth]{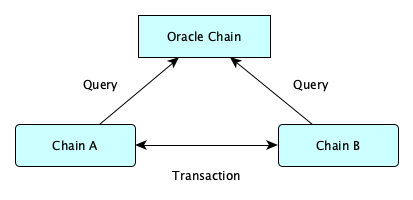}       
\caption{Oracle Chain Use Case}
\label{fig:OracleChainUseCase}
\end{figure}

A simple implementation of this scheme may lead to some risk if the exchange rate is volatile. Either by accident, or due to malicious interference, the two parties may obtain different values from the Oracle Chain. Without proper safeguards, this may cause an inconsistency in the recorded transaction. The possibility of such inconsistencies is unacceptable, and offers the opportunity for malicious actors to corrupt the integrity of the system.

If the consultation of the Oracle Chain is done as part of an Atomic Crosschain Transaction, we can ensure that both parties receive the same value from the Oracle chain, or in the case of an error, neither party receives a response. This guarantees that either the transaction completes, with a consistent value on both sides, or the transaction is aborted with no value transferred.

\subsection{Ephemeral Blockchain Archiving}
In practice blockchains may only be required to exist for a limited time, for example to execute a business deal or perform a particular function that is short-lived. To illustrate this functionality, we consider the sub-division of land. During the acquisition, development application, approval process, site works, site inspections and finally registering the blocks, various blockchains may exist. However, once plan sealing has taken place and all the blocks of land have been registered as separate titles, some blockchains are no longer relevant and become obsolete, while others may need to persist.

The example scenario in Fig.~\ref{fig:land-subdiv} on page~\pageref{fig:land-subdiv} demonstrates the use of Atomic Crosschain Transaction technology where the Developer on the \textit{Construction blockchain} submits a view call to the \textit{Development Application (DA) blockchain} to write relevant details of DA conditions affecting the work on site (e.g. tree clearing, tree retention, and sewer pipes) that need to be inspected by Council or the Utility companies (e.g. Electricity, Water, Sewer), to the DA chain. The project manager, also a participant of the \textit{Site Works blockchain}, is then able to check on the regulatory compliance of DA conditions and schedule inspections by the relevant Authorities. 

Moreover, progress of site construction work, early warning of potential delays, scheduling the different site works, and issuing progress claim assessments to the site operator to raise an invoice for progress payments, can all be recorded on the Construction blockchain. As inspections are completed, the outcomes are recorded on the Site works blockchain, which is periodically queried via a crosschain view call from the DA blockchain, so that the conditions that have been attached to the approval of the development application can be tracked and dates of inspections added to the DA blockchain. 

Once plan sealing has been obtained, the plan is submitted to the Land Registry office to process and issue titles for each of the newly created blocks. Although the plans have been sealed by Council, an atomic function call is necessary from the Land Registry blockchain to the DA blockchain to get compliance information, dates and other details before the Land Registry checks can be completed. All conditions need to be satisfied before titles can be issued. Once titles have been issued, they will be recorded on the Land Registry blockchain and the Site Works and DA blockchains can be archived.

The advantage of a blockchain approach to land sub-division with atomic crosschain capability is that  it provides irrefutable evidence of the compliance with development application conditions across different areas so that Land Registry can more readily process and register titles.

\begin{figure*}
  \includegraphics[width=0.8\linewidth]{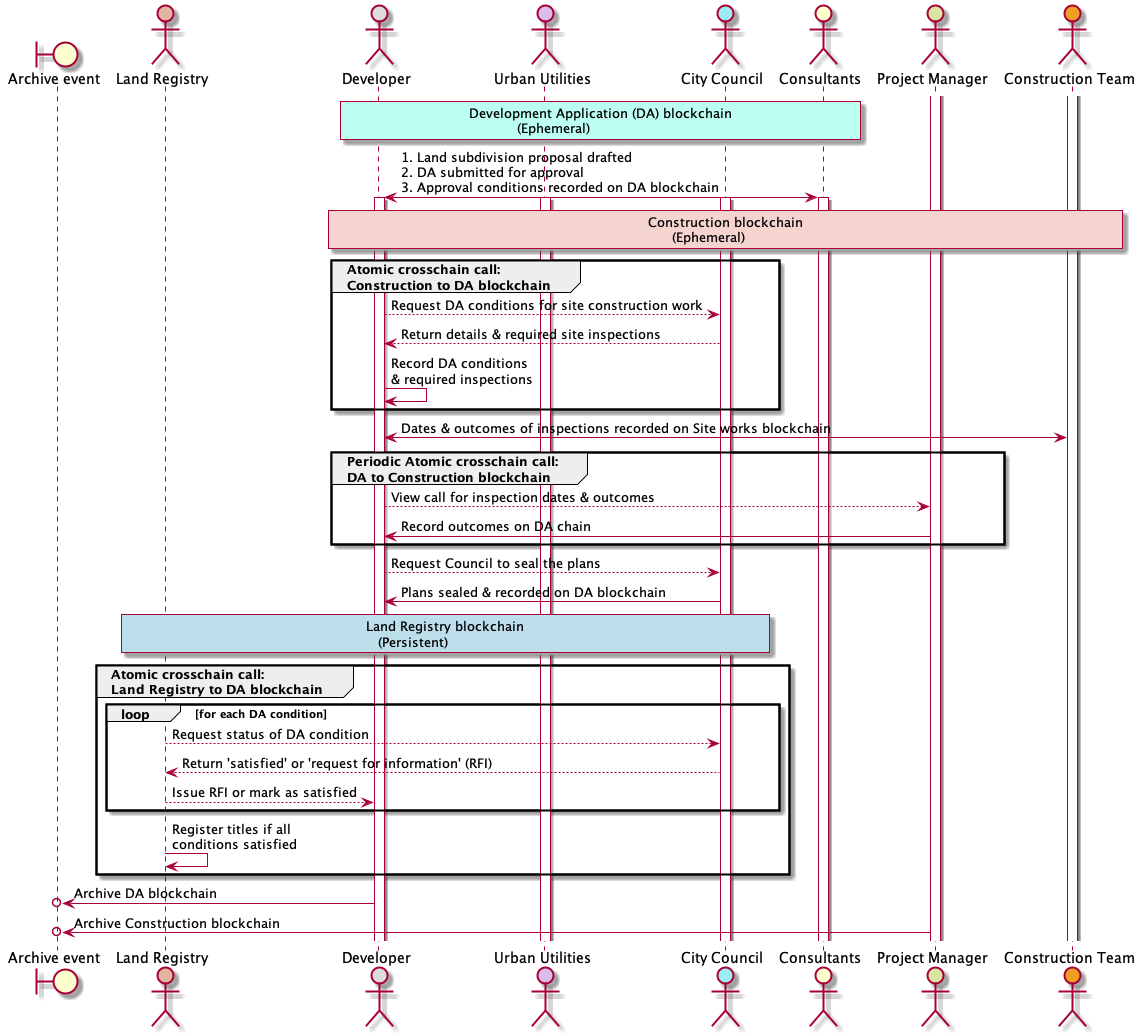}
  \caption{Blockchains for the sub-division of land}
  \label{fig:land-subdiv}
\end{figure*}

\section{Analysis}
\label{sec:analysis}

Typically, any desired property of a distributed system can be expressed as a combination of safety and liveness properties. Similarly, the atomicity property of the atomic crosschain protocol can also be seen as a combination of safety and liveness properties. In general, the safety property captures the notion that bad states are not reachable, and the liveness property captures the notion that good states are eventually reached. For atomicity, the safety property translates to: when the protocol finishes, the whole crosschain transaction gets either committed or rolled back, the liveness property states that: the atomic crosschain transaction protocol eventually (after finite amount of time) terminates. We present a more formal discussion here.

{\bf Limit on the number of Byzantine nodes.}  Consensus protocols invariably has a limit on the number of Byzantine nodes in a network. For example, proof-of-work protocol requires the number of Byzantine nodes to be less than half of the total number of nodes. Similarly, IBFT and IBFT2 protocols have a limit of one-third the total number of nodes. Because our atomic crosschain protocol is built on top of consensus protocols, we require and {\bf assume} that the conditions for the correct operation of the consensus protocols are met. In other words, we inherit the same limit on the number of Byzantine nodes from the dependant consensus protocol.

\begin{theorem}[Safety]
Suppose that the atomic crosschain protocol has finished executing a crosschain transaction $c$. Then:
\begin{itemize}
\item if $c$ succeeds then the protocol successfully commits the state updates of all the associated (originating and subordinate) transactions,
\item if $c$ fails then the protocol rolls back the state updates of all the associated transactions.
\end{itemize}
\end{theorem}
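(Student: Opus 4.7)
The plan is to prove both bullet points by a case analysis on the terminal state of the Crosschain Coordination Contract referenced inside $c$. First I would formalize ``the atomic crosschain protocol has finished executing'' to mean that the Coordination Contract for $c$ is either in the Committed state, in the Ignored state, or that the current Coordination Blockchain block number exceeds the Crosschain Transaction Time-out block number that was recorded when the Start message was accepted. Correspondingly, ``$c$ succeeds'' is identified with the Coordination Contract reaching Committed before the time-out, and ``$c$ fails'' is identified with either Ignored being reached or the time-out being exceeded while the contract is still in Started. This makes the Coordination Contract the single, Byzantine fault tolerant source of truth for the outcome, which is the pivot for the whole argument.

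For the success direction I would walk backward from the Committed state. Reaching Committed requires a valid threshold signed Commit message from the originating blockchain; by the Byzantine bound inherited from the underlying consensus protocol, such a message can only have been produced after the Originating Node collected Ready messages from every blockchain hosting a Subordinate Transaction. Each Ready message is itself a threshold signature attesting that the Subordinate Transaction has been mined and that its updates are being held as provisional state with the affected contracts locked, per the Contract Locking state machine. Once Committed is reached on the Coordination Blockchain, every honest node on every involved blockchain eventually observes it, either directly through a Signalling Transaction or, failing that, through its local timer expiring and forcing a query of the Coordination Contract. Upon observing Committed, the Contract Locking state machine converts the provisional state on that node into normal state and unlocks the contract, so all associated originating and subordinate updates are committed together.

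For the failure direction a symmetric argument applies in two subcases. If Ignored was registered, this again required a valid threshold signed Ignore message, so by the same Byzantine bound an honest majority on the originating blockchain agreed to abandon $c$. If instead the time-out fired while still in Started, the Coordination Contract exposes this condition to any querying node. In either subcase each node that is holding provisional state for $c$ will either receive a Signalling Transaction or, after its local timer plus random jitter expires, query the Coordination Contract and read a non-Committed terminal status; the Contract Locking state machine then discards the provisional state and unlocks the contract, yielding the all-or-nothing rollback.

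The main obstacle will be carefully discharging the asynchronous, per-blockchain nature of how the decision propagates, and showing that no honest node can be left indefinitely in a locked, provisional state after the Coordination Contract has settled. This requires combining three ingredients already available in the design: uniqueness of the terminal Coordination Contract state (both Commit and Ignore leave the Started state, and the time-out is measured in Coordination Blockchain block numbers, which are globally consistent under the assumed consensus); unforgeability of the BLS threshold signatures (so a Byzantine minority on any participating blockchain cannot manufacture a conflicting decision or a phantom Ready); and the Signalling Transaction plus local timer with random jitter that together guarantee every honest node eventually consults the Coordination Contract. I expect most of the proof effort to be spent on this convergence argument; the local effect of the observed terminal state on each individual contract then follows immediately from the Contract Locking transitions described in Figure~\ref{fig:contractlockingstates}.
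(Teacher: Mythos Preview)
Your proposal is correct and follows essentially the same approach as the paper: a case analysis on the terminal state of the Crosschain Coordination Contract, combined with the Signalling Transaction / local-timer mechanism to argue that every honest node eventually reads that state and applies the corresponding commit or rollback. The only minor difference is that the paper also explicitly treats the \emph{not-started} case (failure before the Start message is registered, so there is nothing to roll back) and separately argues that \emph{started} cannot be a terminal state, whereas you fold both of these into your definition of ``finished''; this is a cosmetic rather than substantive divergence.
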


\begin{proof}
We prove by case analysis of all possible states of the crosschain transaction $c$ recorded in the coordination contract.

\begin{itemize}
\item {\bf not-started} This is possible only if $c$ fails before the state of the coordination contract is updated to {\em started}. At this point neither the Originating Transaction nor the Subordinate Transactions has started to execute. Hence there is nothing to rollback. 

\item {\bf started} We show that this situation is not possible. The coordination contract has received the {\em start} message from the originating chain, but not the {\em commit} or {\em ignore} message. Based on the time-out specified in the {\em start} message, the nodes on the originating chain start their respective local timers, after processing the Originating Transaction. Similarly, the nodes on the subordinate chains start local timers after processing their respective Subordinate Transactions. After the timers expire, the nodes look up the state of $c$ in the coordination contract. In case of any failure in processing the Originating Transaction or in one of the Subordinate Transactions, the originating chain would have updated the coordination contract to {\em ignored} state. Also, the state of the coordination contract is updated to {\em ignored} state when there is a time-out. So, when $c$ is processed (either successfully or not) the state of $c$ in coordination contract can only end up in {\em committed} or {\em ignored}, after having been in the {\em started} state.

\item {\bf committed} It means that the coordination contract received the threshold signed {\em commit} message from the originating chain. After submitting the {\em commit} message, originating blockchain's coordinating node messages all the validators in its chain and other coordinating nodes on the Subordinate Transaction blockchains to check the state of the coordination contract. If it is Byzantine, then the local timers on the nodes (of all participating chains) expire and the nodes look up the coordination contract for the state of $c$, and find that $c$ is {\em committed}. They then commit their local provisional states. Hence the whole transaction $c$ is committed.

\item {\bf ignored} This is possible when either the coordinating node on the originating chain has sent an {\em ignore} message, or $c$ has timed out, and the coordination contract updates the state of $c$ to be {\em ignored}. Nodes will commit or roll back their provisional states only after looking up the coordination contract. In this case all nodes on all participating chains roll back their updates, implying that the whole crosschain transaction $c$ is rolled back. 
\end{itemize}
\end{proof}

\begin{theorem}[Liveness]
Suppose the atomic crosschain protocol is started, meaning, a crosschain transaction is submitted to a coordinating node on the originating blockchain. Then, the atomic crosschain protocol protocol will eventually (after a finite number of steps) finish.
\end{theorem}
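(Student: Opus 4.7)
The plan is to derive termination from the hard deadline enforced on the Coordination Blockchain. First I would invoke the inherited consensus assumption: the Coordination Blockchain continues to append blocks, so its block number eventually exceeds the Crosschain Transaction Time-out specified in the Originating Transaction. From that instant onward, the state of the transaction $c$ recorded in the Crosschain Coordination Contract can only be one of \textit{committed} or \textit{ignored}; if neither a \textit{Commit} nor an \textit{Ignore} message was posted before the deadline, the contract transitions to \textit{ignored} by the time-out rule stated in Section~\ref{sec:components}. In particular, the coordination-contract state becomes terminal after a bounded number of Coordination Blockchain blocks.

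Next I would do a case analysis on the behaviour of the Coordinating Node up to this deadline. In the happy case, each subordinate chain processes its Subordinate Transaction in finite time (inheriting liveness from its own consensus protocol), returns a threshold-signed \textit{Ready} message, and the Coordinating Node assembles and posts a \textit{Commit} message before the deadline. If the Trial Execution fails on any chain, the Coordinating Node assembles and posts an \textit{Ignore} message. If the Coordinating Node is unresponsive or Byzantine and posts neither, the deadline-driven transition described above applies. In every subcase the coordination-contract state becomes terminal in bounded time.

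Having fixed the contract state, I would then lift this to termination of every participating chain. When a node first becomes part of the Crosschain Transaction it sets a local timer tied to the Transaction Timeout Block Number plus a small random offset. Once this timer fires, the node consults the Coordination Contract, observes the terminal state, commits or rolls back its provisional state, and broadcasts a Signalling Transaction. Under the inherited assumption that fewer than $M$ validators on each chain are Byzantine, at least one honest node per chain will do this, its Signalling Transaction will be mined in finite time by the underlying consensus, and every other node either receives that signal or fires its own timer. Hence every chain reaches its terminal state within a finite number of steps.

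The main obstacle is making the Byzantine-tolerance argument precise, since a Byzantine Coordinating Node can withhold \textit{Ready} or \textit{Commit} information and Byzantine validators can refuse to cooperate in signalling. The proof must therefore show that (i) a single honest node per chain firing its local timer suffices to converge that chain, and (ii) the random back-off prevents livelock among simultaneous Signalling Transactions without jeopardising progress. Both facts reduce cleanly to the stated assumption on the underlying consensus protocol rather than requiring any new mechanism from the crosschain layer.
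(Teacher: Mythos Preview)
Your proposal is correct and rests on the same mechanisms as the paper---the Coordination-Blockchain time-out together with per-node local timers---but you organise the argument differently. The paper proceeds by enumerating crash scenarios for coordinating nodes (originating-chain coordinator crashes before \emph{start}, between \emph{start} and \emph{commit}, after \emph{commit}; subordinate-chain coordinator crashes) and checks that each scenario still leads to a terminal state. You instead give a two-layer decomposition: first show that the Coordination Contract necessarily reaches a terminal state (\emph{committed} or \emph{ignored}) within a bounded number of Coordination-Blockchain blocks, and then lift that to every participating chain via the local-timer/Signalling-Transaction mechanism. Your structure is cleaner and avoids the need to exhaust failure modes, and you make explicit two points the paper leaves implicit---that liveness of the Coordination Blockchain itself is what guarantees the deadline is eventually passed, and that one honest validator per chain firing its timer suffices to drive that chain to completion. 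The paper's scenario-based argument, on the other hand, is closer to the operational description of the protocol and makes it easier to read off what actually happens in each concrete failure case.
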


\begin{proof}
The intuition of the proof is simple. The protocol itself has a finite number of steps. Also, as the {\em start} message is submitted, all validator nodes on all participating chains start their local timer. This ensures that the whole transaction will time out eventually, in case there are indefinite delays. 

We consider the special case when the crashing (or becoming Byzantine) of nodes might prevent the protocol from terminating.  The crashing of other nodes are governed by the dependant consensus protocol. Hence we assume that we always have a minimum number of nodes to be non-Byzantine as dictated by the consensus protocol. We examine if crashing a coordinating node introduces indefinite waits, which implies that the protocol does not terminate. 
\begin{itemize}
\item Suppose the coordinating node on the originating chain crashes before submitting the crosschain transaction {\em start} message to the coordination contract. Then no other chain knows about this crosschain transaction and this transaction is terminated and deemed to have failed. 

\item Suppose the coordinating node on the originating chain crashes after submitting the {\em start} message but before submitting the {\em commit} message, then the whole transaction times out. The local timers on validator nodes of the originating blockchain time out, look up the coordination contract, find the transaction is {\em ignored}, and abort the transaction. The subordinate chains complete the execution of views or transactions and send their Subordinate Transaction {\em ready} messages to the coordinating node of the originating chain. Because the coordinating node of originating chain has crashed, they observe that the state of the coordinating contract is never changed to {\em committed}, and the validator nodes on all participating chains will eventually time out, look up the coordination contract (which is set to {\em ignored} after time out), and abort the transaction.
 
\item Suppose the coordinating node on originating chain crashes after submitting the crosschain transaction {\em commit} message to the coordination contract. Then (validator nodes on) all chains see the {\em committed} state of coordinating contract and will commit the transaction. 

\item Suppose the coordinating node on a subordinate chain crashes. Then it cannot send the Subordinate Transaction {\em ready} message to the originating chain. The coordinating node on the originating chain keeps waiting and times out. The coordination contract gets updated to {\em ignored} state and the transaction terminates and is rolled back.
\end{itemize}
\end{proof}

Establishing the liveness property is very useful as it implies the properties of termination and deadlock-free.

\section{Conclusion}
Atomic Crosschain Transactions allow application developers to create complex cross-blockchain applications in a straightforward manner. It does this by absorbing the vast majority of the complexity required to deliver this behaviour using a novel combination of technologies.

\ifCLASSOPTIONcompsoc
  \section*{Acknowledgments}
\else
  \section*{Acknowledgment}
\fi
This research has been undertaken whilst we have been employed full-time at ConsenSys. Peter acknowledges the support of University of Queensland where he is completing his PhD, and in particular the support of my PhD supervisor Dr Marius Portmann.

We acknowledge the co-authors of the original Atomic Crosschain Transaction paper \cite{robinson2019b} Dr David Hyland-Wood and Roberto Saltini for their help creating the technology upon which this paper is based. We thank Dr Catherine Jones and Horacio Mijail Anton Quiles for reviewing this paper and providing astute feedback.

\ifall
\fi

\bibliographystyle{IEEEtran}
\bibliography{IEEEabrv,ref}

\end{document}